\newtheorem{theorem}{Theorem}
\newtheorem{corollary}[theorem]{Corollary}
\newtheorem{lemma}[theorem]{Lemma}
\newtheorem{definition}[theorem]{Definition}
\newtheorem{remark}[theorem]{Remark}
\newtheorem{example}[theorem]{Example}
\begin{document}

\preprint{APS/123-QED}

\title{Degenerate Local-dimension-invariant Stabilizer Codes and an Alternative Bound for the Distance Preservation Condition}% Force line breaks with \\
%\thanks{A footnote to the article title}%

\author{Lane G. Gunderman}
 \email{lgunderman@uwaterloo.ca}
\affiliation{%
 The Institute for Quantum Computing, University of Waterloo, Waterloo, Ontario, N2L 3G1, Canada
}%
\affiliation{Department of Physics and Astronomy, University of Waterloo, Waterloo, Ontario, N2L 3G1, Canada}

\date{\today}

\begin{abstract}
One hurdle to performing reliable quantum computations is overcoming noise. One possibility is to reduce the number of particles needing to be protected from noise and instead use systems with more states, so called qudit quantum computers. In this paper we show that codes for these systems can be derived from already known codes, and in particular that degenerate stabilizer codes can have their distance also promised upon sufficiently large local-dimension, as well as a new bound on the local-dimension required to preserve the distance of local-dimension-invariant codes, which is a result which could prove to be useful for error-corrected qudit quantum computers.
\end{abstract}

\maketitle

Having protected quantum information is an essential piece of being able to perform quantum computations. There are a variety of methods to help protect quantum information such as those discussed in \cite{lidar2013quantum}. In this work we focus on stabilizer codes as they are the quantum analog of classical linear codes. Even with error-correcting codes, having sufficient amounts of protected quantum information to perform useful tasks is still an unresolved challenge. A way to retain a similarly sized computational space while reducing the number of particles that need precise controls is to replace the standard choice of qubits with \textit{qudits}, quantum particles with $q$ levels, also known as local-dimension $q$ \cite{wang2020qudits}. Throughout this work we require $q$ to be a prime so that each nonzero element has a unique multiplicative inverse over $\mathbb{Z}_q$. This restriction can likely be removed, but for simplicity and clarity we only consider this case. Experimental realizations of qudit systems are currently underway \cite{low2020practical,quditlight,kononenko2020characterization,yurtalan2020implementation}, so having more error-correcting codes will aid in protecting such systems.

Prior work on qudit error-correcting codes have at times had challenging restrictions between the parameters of the code \cite{quditgeneral,quditbch,quditmds}, and we've already made progress on reducing this barrier in a prior paper \cite{gunderman2020local}. Our prior work showed the ability to make error-correcting codes that preserved their parameters even upon changing the local-dimension of the system, provided the local-dimension is sufficiently large. Unfortunately the ability to promise the distance of the codes was only shown for non-degenerate codes and with a large local-dimension value required. Beyond this, qudits also have proven connections to foundational aspects of physics \cite{contextuality}. Seeing these potential reasons for using qudits, this work builds off of our prior work to expand the local-dimension-invariant work to the case of degenerate codes as well as providing a roughly quadratic improvement in the size of the local-dimension needed to still promise the distance of the code. With these results the practicality of using this method is improved as well as now providing the option of applying the result to the essential class of degenerate codes, such as quantum versions of low-density parity-check (LDPC) codes.

\section{Definitions for Qudit Stabilizer Codes}

In this section we review some key facts about qudit stabilizer codes. For a more complete guide on qudit stabilizer codes, we recommend \cite{quditgeneral}. The definitions laid out here will be used throughout this work. Let $q$ be the local-dimension of a system, where $q$ is a prime number. We will denote by $\mathbb{Z}_q$ the set $\{0,1,\ldots q-1\}$. When $q=2$ we refer to each register as a qubit, while for any value of $q$ we call each register a qudit. In order to speak more generally and not specify $q$, we will often times refer to each register as a particle instead. We now begin to define the operations for these registers.

\begin{definition}\label{dim}
Generalized Paulis for a particle over $q$ orthogonal levels (local-dimension $q$) are given by:
\begin{equation}
 X_q|j\rangle=|(j+1)\mod q\rangle,\quad Z_q|j\rangle=\omega^j|j\rangle
\end{equation}
with $\omega=e^{2\pi i/q}$, where $j\in\mathbb{Z}_q$. These Paulis form a group, denoted $\mathbb{P}_q$.
\end{definition}

When $q=2$, these are the standard qubit operators $X$ and $Z$, with $Y=iXZ$. This group structure is preserved over tensor products since each of these Paulis has order $q$. A generalized Pauli over $n$ registers is a tensor product of $n$ generalized Pauli group members over a single register. %As an example, with $q=3$ and $n=4$, the operator $X\otimes (X^2Z)\otimes I\otimes Z$ is in $\mathbb{P}_3^4$.

A commuting subgroup of generalized Pauli operators with $n-k$ generators, but not including any nontrivial coefficient for the identity operator, is equivalent to a stabilizer code. The number of orthogonal eigenvectors, which form bases called codewords, for these $n-k$ generators is $q^k$. In effect, we have constructed $k$ \textit{logical} particles from the $n$ \textit{physical} particles. If we are to use these subgroups for error-correction purposes then they ought to be able to have some accidental operator occur and still have the codewords be discernible.
%We will take as our noise model the common choice of the depolarizing channel for qudits acting on our information $\rho$ with error strength $p$:
%\begin{equation}
%    \mathcal{E}(\rho)=(1-p)\rho+\frac{p}{n(q^2-1)}\sum_{i=1}^n\sum_{E\in \mathbb{P}_q \setminus \{I\}} E^{(i)}\rho (E^{(i)})^\dag,
%\end{equation}
%where in $E^{(i)}$ the superscript indicates that this operator acts trivially on all registers except particle $i$ where the operator $E$ is applied.
We will work under the assumption that errors on distinct particles are independent and we will assume the error model on each qudit is the depolarizing channel.\if{false}, which we define as follows:
\begin{equation}
    \mathcal{E}(\rho) = (1-p)\rho + \frac{p}{q^2-1} \sum_{E \in \mathbb{P}_q \setminus \{I\}} E \rho E^\dagger
\end{equation}\fi
Given this error model we will predominantly be interested in the number of non-identity terms in any error as the exponent of the error term increases with this.

\begin{definition}
The weight of an $n$-qudit Pauli operator is the number of non-identity operators in it.
\end{definition}

\begin{definition}
A stabilizer code, specified by its $n-k$ generators, is characterized by the following set of parameters:
\begin{itemize}
\item $n$: the number of (physical) particles that are used to protect the information.
\item $k$: the number of encoded (logical) particles.
\item $d$: the distance of the code, given by the lowest weight of an undetectable generalized Pauli error. An undetectable generalized Pauli error is an $n$-qudit Pauli operator which commutes with all elements of the stabilizer group, but is not in the group itself.
\end{itemize}
These values are specified for a particular code as $[[n,k,d]]_q$, where $q$ is the local-dimension of the qudits.
\end{definition}

We pause for a moment here to discuss how degenerate codes differ from non-degenerate codes. Degenerate codes are different in the following equivalent ways. Firstly, they may have multiple errors with the same syndrome value and that map to different physical states, but upon recovery still map back to the same logical state. Secondly, degenerate codes may have generators, aside from the identity operator, which have lower weight than the distance of the code. These two differences make degenerate codes markedly different from their non-degenerate counterpart. Degenerate codes, while having these extra nuances, are a crucial class of stabilizer codes as any quantum analog of a low-density parity-check (LDPC) code with high distance will need to be a degenerate code. We will begin our new results by focusing on non-degenerate codes, then move to the degenerate case in Theorem \ref{degen}, however, there are more tools needed before discussing the new results.% page 51 in Gottesman's book!! Perhaps $t$ should be used instead of $d$ in the below argument?

%%%%%%%%%%%%%%%%%%%%%%%%%%%%%%%%%%%%%%%%%%%%%%%%%%%%%%%%%%%%%%%%%%%%%%%%%%%%

Working with tensors of operators can be challenging, and so we make use of the following well-known mapping from these to vectors, following the notation from \cite{gunderman2020local}. This representation is often times called the symplectic representation for the operators, but we use this notation instead to allow for greater flexibility, particularly in specifying the local-dimension of the mapping. This linear algebraic representation will be used for our proofs.

\begin{definition}[$\phi$ representation of a qudit operator]
We define the linear surjective map: 
\begin{equation}
\phi_q: \mathbb{P}_q^n\mapsto \mathbb{Z}_q^{2n}
\end{equation}
which carries an $n$-qudit Pauli in $\mathbb{P}_q^n$ to a $2n$ vector mod $q$, where we define this mapping by:
%\begin{multline}
%\phi_q(\omega^\alpha \otimes_{i-1} I\otimes X_q^a Z_q^b\otimes_{n-i} I)\\
%=( 0^{ i-1}\ a\ 0^{ n-i} | 0^{ i-1}\ b\ 0^{ n-i}),
%\end{multline}
\begin{equation}
I^{\otimes i-1} X_q^a Z_q^b I^{\otimes n-i} \mapsto \left( 0^{i-1}\ a\ 0^{n-i} \middle\vert 0^{i-1}\ b\ 0^{n-i}\right),
\end{equation}
which puts the power of the $i$-th $X$ operator in the $i$-th position and the power of the $i$-th $Z$ operator in the $(n+i)$-th position of the output vector. This mapping is defined as a homomorphism with: $\phi_q(s_1\circ s_2)=\phi_q(s_1)\oplus \phi_q(s_2)$, where $\oplus$ is component-wise addition mod $q$. We denote the first half of the vector as $\phi_{q,x}$ and the second half as $\phi_{q,z}$.
\end{definition}

We may invert the map to return to the original $n$-qudit Pauli operator with the global phase being undetermined. We make note of a special case of the $\phi$ representation:

\begin{definition}
Let $q$ be the dimension of the initial system. Then we denote by $\phi_\infty$ the mapping:
\begin{equation}
    \phi_\infty:  \mathbb{P}_q^n\mapsto \mathbb{Z}^{2n}
\end{equation}
where no longer are any operations taken $\mod$ some base, but instead carried over the full set of integers.
\end{definition}

The ability to define $\phi_\infty$ as a homomorphism still (and with the same rule) is a portion of the results of \cite{gunderman2020local}. $\phi_q$ is the standard choice for working over $q$ bases, however, our $\phi_\infty$ allows us to avoid being dependent on the local-dimension of our system when working with our code. Formally we will write a code in $\phi_q$, perform some operations, then write it in $\phi_\infty$, then select a new local-dimension $q'$ and use $\phi_{q'}$. We shorten this to write it as $\phi_\infty$, and can later select to write it as $\phi_{q'}$ for some prime $q'$ by taking element-wise $\mod q'$.

The commutator of two operators in this picture is given by the following definition:
\begin{definition}
Let $s_i,s_j$ be two qudit Pauli operators over $q$ bases, then these commute if and only if:
\begin{equation}
\phi_q(s_i)\odot \phi_q(s_j)=0\mod q
\end{equation}
where $\odot$ is the symplectic product, defined by:
\begin{multline}
\phi_q(s_i)\odot \phi_q(s_j)\\ =\oplus_k [\phi_{q,z}(s_j)_k\cdot  \phi_{q,x}(s_i)_k- \phi_{q,x}(s_j)_k \cdot \phi_{q,z}(s_i)_k]
\end{multline}
where $\cdot$ is standard integer multiplication $\mod q$ and $\oplus$ is addition $\mod q$.
\end{definition}

When the commutator of $s_i$ and $s_j$ is not zero, this provides the difference in the number of $X$ operators in $s_i$ that must pass a $Z$ operator in $s_j$ and the number of $Z$ operators in $s_i$ that must pass an $X$ operator in $s_j$ when attempting to switch the order of these two operators.

Before finishing, we make a brief list of some possible operations we can perform on our $\phi$ representation:
\begin{enumerate}
    \item We may perform elementary row operations over $\mathbb{Z}_q$, corresponding to relabelling and composing generators together.
    \item We may swap registers (qudits) in the following ways:
        \begin{enumerate}
            \item We may swap columns $(i,i+n)$ and $(j,j+n)$ for $1\leq i,j\leq n$, corresponding to relabelling qudits.
            \item We may swap columns $i$ and $(-1)\cdot (i+n)$, for $1\leq i\leq n$, corresponding to conjugating by a Hadamard gate on particle $i$ (or Discrete Fourier Transforms in the qudit case \cite{qudit}) thus swapping $X$ and $Z$'s roles on that qudit.
        \end{enumerate}
\end{enumerate}

All of these operations leave the code parameters $n$, $k$, and $d$ alone, but can be used in proofs.

\subsection{Local-dimension-invariant Codes}

In this section we recall the results relating to local-dimension-invariant (LDI) stabilizer codes. These codes answer the question of when we can apply a code from one local-dimension $q$ on a system with a different local-dimension $p$. While an unusual property, a LDI code would permit the importing of smaller local-dimension codes for larger local-dimension systems. Some codes with particular parameters may not be known and so this fills in some of these gaps. Additionally, this framework could potentially provide insights into local-dimension-invariant measurements. Few examples of LDI codes, although not by this name, were known, notable the 5 particle code \cite{5qudit} and the 9 particle code \cite{chau1997correcting}, until the recent work in \cite{gunderman2020local} which showed that all codes can satisfy the commutation requirements, and at least for sufficiently large local-dimension the distance can also be at least preserved. We will review next the primary results from that work.

\begin{definition}
A stabilizer code $S$ is called local-dimension-invariant (LDI) iff:
\begin{equation}
    \phi_\infty(s_i)\odot \phi_\infty(s_j)=0,\quad \forall s_i,s_j\in S.
\end{equation}
\end{definition}

As an example, consider the two qubit code generated by $\langle X\otimes X,Z\otimes Z\rangle$. The symplectic product between the two generators is $2$, so it makes it a valid qubit code, however, $2\mod p\neq 0$ unless $p=2$, so it is not a valid qudit code for $p\neq 2$. If we instead transform the code into one generated by $\langle X\otimes X^{-1},Z\otimes Z\rangle$, then the symplectic product is now $0$, and so it can be used as generators for any choice of local-dimension, and so is an LDI code. The next statement explains that it is always possible to do so:

\begin{theorem}\label{inv}
All stabilizer codes, $S$, can be put into an LDI form. One such method is to put $S$ into canonical form $[I_k\ X_2\ |\ Z_1\ Z_2]$ then transform the code into $[I_k\ X_2\ |\ Z_1+L\ \ Z_2]$, with $L_{ij}=\phi_\infty(s_i)\odot \phi_\infty(s_j)$ when $i>j$ and $0$ otherwise.
\end{theorem}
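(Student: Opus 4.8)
The plan is to reduce the claim to a statement about the antisymmetric Gram matrix of symplectic products and to show that the prescribed modification clears this matrix over $\mathbb{Z}$ while changing nothing modulo $q$. First I would record that, for generators $s_1,\dots,s_{n-k}$, the integer symplectic products $M_{ij} := \phi_\infty(s_i)\odot\phi_\infty(s_j)$ assemble into a matrix $M$ which the definition of $\odot$ forces to be antisymmetric, i.e. $M_{ij}=-M_{ji}$ and $M_{ii}=0$. Because $S$ is a valid code over $\mathbb{Z}_q$, every pair of generators commutes mod $q$, so $M\equiv 0 \pmod q$; that is, each $M_{ij}$ is an integer multiple of $q$. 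The LDI condition is exactly $M=0$ over $\mathbb{Z}$, so I would frame the whole argument as: modify the generators, using only operations that preserve the $\mathbb{Z}_q$ code (hence $n,k,d$), until $M$ vanishes identically.

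Next I would justify reaching the canonical form $[I_k\ X_2\mid Z_1\ Z_2]$: using row operations over $\mathbb{Z}_q$ (recombining generators), qudit relabelings, and Hadamards (swapping the $X$ and $Z$ roles on a chosen qudit), one brings the $X$-block to a square identity followed by the rectangular block $X_2$. The delicate point is guaranteeing the $X$-block can be made full rank; when Gaussian elimination leaves pure-$Z$ rows, one applies a Hadamard on a qudit supporting such a row to promote a $Z$ into an $X$, and I would note (or cite the prior construction) that this can always be arranged. I expect this full-rank reduction to be the main obstacle to a fully self-contained argument, whereas the remaining computation is mechanical.

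The core step is to compute the effect of $Z_1 \mapsto Z_1+L$ on $M$. Writing the $X$-part of generator $i$ as $(e_i\,,\,X_{2,i})$, where $e_i$ is a standard basis vector coming from the identity block, and splitting the symplectic sum over the two column blocks, the identity structure collapses the first block's contribution to $(Z_1)_{ji}-(Z_1)_{ij}$, giving $M_{ij} = (Z_1)_{ji}-(Z_1)_{ij} + \big(X_2\,Z_2^{\mathsf T}-Z_2\,X_2^{\mathsf T}\big)_{ij}$. Replacing $Z_1$ by $Z_1+L$ therefore sends $M_{ij}\mapsto M_{ij}+L_{ji}-L_{ij}$. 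I would then take $L$ to be the strictly lower-triangular part of $M$ (so $L_{ij}=M_{ij}$ for $i>j$ and $0$ otherwise) and verify, using antisymmetry of $M$, that $L_{ij}-L_{ji}=M_{ij}$ in all three cases $i>j$, $i<j$, $i=j$; hence the new Gram matrix is $M_{ij}+L_{ji}-L_{ij}=0$ for every pair, which is precisely the LDI condition.

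Finally I would check that nothing was broken modulo $q$. Since $M\equiv 0\pmod q$, every entry of $L$ is a multiple of $q$, so $Z_1+L\equiv Z_1\pmod q$; because $X_q$ and $Z_q$ have order $q$, adding multiples of $q$ to the exponents in the $\phi_\infty$ picture leaves each generator unchanged as an operator over $\mathbb{Z}_q$. Thus the modified code has identical generators mod $q$ — same $n$, $k$, and $d$ — but is now LDI over $\mathbb{Z}$, completing the argument.
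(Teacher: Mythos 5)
Your proposal is correct and follows essentially the same route as the paper: the paper's argument (deferred largely to the prior LDI work) is precisely this construction — put the code in canonical form, add the strictly lower-triangular part of the integer symplectic Gram matrix to $Z_1$, and observe that since every entry of $L$ is a multiple of $q$ the code mod $q$ is untouched while antisymmetry of the Gram matrix kills all integer symplectic products. Your write-up simply supplies the explicit block computation $M_{ij}\mapsto M_{ij}+L_{ji}-L_{ij}$ and the full-rank reduction detail that the paper leaves implicit.
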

\if{false}
\begin{proof}
Begin by putting $S$ into canonical form so that:
\begin{equation}
    \phi_q(S)=[I_{n-k}\ \ X_2\ |\ Z_1\ Z_2],
\end{equation}
then the modified code given by:
\begin{equation}
    \phi_\infty(S'):=[I_{n-k}\ \ X_2\ |\ Z_1+L\ \ Z_2],
\end{equation}
where $L_{ij}=\phi_\infty (s_i)\odot \phi_\infty(s_j)$ when $i>j$ and $0$ otherwise. Then $S'$ satisfies the LDI condition and has $\phi_q(S')=\phi_q(S)$.
\end{proof}
\fi
Note that this does not say all codes have a \textit{unique} LDI form, just that there exists one. The proof used is useful as it gives a prescriptive method for turning a code into an LDI form, however, if one does not put the code into canonical form, the code can still be transformed into an LDI form as this process is equivalent to finding solutions to an integer linear program with an abundance of variables. %It will be useful to know an upper-bound on the size of the maximal entry in any LDI form. As shown in \cite{gunderman2020local}, through a simple counting argument the maximal element in $\phi_\infty(S)$, $B$, is upper bounded by: $(2+k(q-1))(q-1)$.

As of this point we have merely generated a set of commuting operators that are local-dimension independent. This does not provide for any claims on the distance of the code produced through this method aside from promising that the procedure does not change the distance of the code over the initial local-dimension $q$. For this, we have the following Theorem:
\begin{theorem}
For all primes $p>p^*$, with $p^*$ a cutoff value greater than $q$, the distance of an LDI form of a non-degenerate stabilizer code $[[n,k,d]]_q$ into $p$ bases, $[[n,k,d']]_p$, has $d'\geq d$.
\end{theorem}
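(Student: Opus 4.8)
The plan is to work entirely in the integer symplectic representation $\phi_\infty$ and to prove the slightly stronger statement that, for $p$ large enough, \emph{every} nonzero vector in the centralizer of the stabilizer over $\mathbb{Z}_p$ has qudit-weight at least $d$; this immediately yields $d'\geq d$, since undetectable errors form a subset of the nonzero centralizer. First I would record the non-degenerate reformulation of the distance: non-degeneracy says every nontrivial stabilizer element has weight $\geq d$, while the distance says every centralizer element \emph{not} in the stabilizer has weight $\geq d$, so together these give that every nonzero element of the centralizer over $\mathbb{Z}_q$ has weight at least $d$. This is the only place the non-degeneracy hypothesis is used, and it is precisely what fails for degenerate codes, motivating the separate treatment in Theorem \ref{degen}.

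Next, let $M\in\mathbb{Z}^{(n-k)\times 2n}$ be the LDI form of the generators guaranteed by Theorem \ref{inv}, and encode the symplectic product as a bilinear form $\phi_\infty(s_i)\odot\phi_\infty(s_j)=v^{T}Jv'$ with $J$ an integer matrix having entries in $\{0,\pm 1\}$. Setting $\tilde M=MJ$, the centralizer over any base is the reduction of $\ker(\tilde M)$ in that base, and the LDI condition guarantees $\mathrm{rowspan}(M)\subseteq\ker(\tilde M)$ over $\mathbb{Z}$. Suppose for contradiction that for some prime $p>p^*$ there is a nonzero $v\in\mathbb{Z}_p^{2n}$ of qudit-weight $w<d$ with $\tilde M v=0\bmod p$. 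Let $T\subseteq\{1,\ldots,2n\}$ be the set of $2w$ coordinates coming from the $w$ supported qudits, and let $A=\tilde M_T$ be the corresponding $(n-k)\times 2w$ integer submatrix, so that $Av_T=0\bmod p$ with $v_T\neq 0$.

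The argument then splits on the rank of $A$ over $\mathbb{Q}$. If $A$ has full column rank over $\mathbb{Q}$, some $2w\times 2w$ minor of $A$ is a nonzero integer; choosing the cutoff $p^*$ to exceed the largest absolute value of all such minors, over the finitely many coordinate sets $T$ with $|T|<2d$, forces $p$ not to divide that minor, so $A$ has full column rank mod $p$ and $v_T=0$, a contradiction. If instead $A$ is column-rank-deficient over $\mathbb{Q}$, then it has a nonzero rational, hence (after clearing denominators and dividing by the gcd of the entries) primitive integer, kernel vector $u$; reducing mod $q$ yields a nonzero element of $\ker(\tilde M)\bmod q$ supported on $T$, i.e.\ a nonzero centralizer element over $\mathbb{Z}_q$ of weight at most $w<d$, contradicting the reformulated distance from the first step.

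The main obstacle is this second case: it is tempting to lift $v$ itself from $\mathbb{Z}_p$ back to $\mathbb{Z}_q$, which does not work because reductions at different primes do not commute with either the weight or kernel membership. The resolution is to pass through the \emph{rational} kernel of the support-restricted matrix $A$, which is base-independent, and only then reduce to $\mathbb{Z}_q$; primitivity of the chosen integer kernel vector is exactly what guarantees the reduction remains nonzero. Finally I would check that $p^*$ can be taken greater than $q$ and depends only on the code through the minors of $\tilde M$, completing the proof; tightening this cutoff via Hadamard-type determinant bounds is what the later quadratic improvement addresses.
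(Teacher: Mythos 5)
Your proof is correct and takes essentially the same approach as the paper's own argument (the proof of Theorem~\ref{improvedbound}): restrict a putative undetectable error of weight $w<d$ to its $2w$ support columns and split on whether the restricted matrix is singular over $\mathbb{Q}$, which is exactly the paper's artifact-error versus unavoidable-error dichotomy, with the nonsingular case handled by forcing $p$ to exceed all relevant minors. If anything, your handling of the singular case (passing to a primitive integer kernel vector and reducing mod $q$ to contradict non-degeneracy) is more rigorous than the paper's one-line dismissal of unavoidable errors, and omitting the explicit Hadamard-type determinant bound is appropriate here since the statement only asserts the existence of some cutoff $p^*$.
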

There are two caveats to this result, one of which we resolve here, the other of which we provide an improvement on. Let $B$ be the maximal entry in $\phi_\infty(S)$. Firstly, this result is only for the case of non-degenerate codes. We will resolve this with Theorem \ref{degen}. Secondly, the initially proven bound was $p^*=B^{2(d-1)}(2(d-1))^{(d-1)}$, which grows very rapidly. While it was true that all primes below $p^*$ could have their distances checked computationally, this still left a large number of primes to check in most cases. In this work we manage to prove an alternative bound that has a nearly quadratic improvement on the dependency on $B$, as well as a cutoff bound $p_D^*$ in the degenerate case. In the next section we show this alternative cutoff bound, while in the section thereafter the ability to provide a distance promise for degenerate codes is proven.

\section{Alternative Cutoff Bound for the Distance Promise}

While the proof of Theorem \ref{inv} from \cite{gunderman2020local} used $L_{ij}=\phi_\infty(s_i)\odot \phi_\infty(s_j)$ in order to generate a single LDI form, we may generate other LDI forms by altering the added $L$ matrix. We note two of these now: $L^{(+)}$ and $L^{(-)}$.

\begin{definition}
$L^{(+)}$ ($L^{(-)}$) has $L_{ij}^{(+)}$ ($L_{ij}^{(-)}$) is $\phi_\infty(s_i)\odot \phi_\infty(s_j)$ if the symplectic product is greater than zero (less than zero).
\end{definition}

These alternative $L$ matrices each provide a different property. Firstly, using $L^{(+)}$ allows $\phi_\infty(S)$ to have only non-negative entries. There are certain properties that are only generally true for matrices with non-negative entries, so this can perhaps be of use. Additionally, this could be of use for systems formally with countably infinite local-dimension, such as Bosonic systems. Secondly, $L^{(-)}$ permits a slight reduction in the bound for the maximal entry in $\phi_\infty(S)$, as the following Lemma shows:

\begin{lemma}
The maximal entry in $\phi_\infty(S)$, $B$, can be at most $(1+k(q-1))(q-1)$, and generally $B\leq \max_{i,j}|\phi_\infty(s_i)\odot \phi_\infty(s_j)|$.
\end{lemma}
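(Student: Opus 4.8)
The plan is to exploit the explicit block structure of the canonical form and track precisely which entries are enlarged when one passes to an LDI form. First I would put $S$ into the canonical form $[I_{n-k}\ X_2\ |\ Z_1\ Z_2]$ of Theorem \ref{inv}, in which every entry of $X_2$, $Z_1$, and $Z_2$ lies in $\{0,1,\ldots,q-1\}$ (they arise as reductions $\bmod\ q$), so that each such entry has magnitude at most $q-1$. Only the $Z_1$ block is altered in forming the LDI code, since $L$ is added there; the blocks $I_{n-k}$, $X_2$, and $Z_2$ are untouched and therefore already satisfy the claimed bound because $q-1\leq(1+k(q-1))(q-1)$. Hence the whole argument reduces to bounding the entries of $Z_1+L$.

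Next I would compute the symplectic product of two generators directly from the canonical blocks. Writing $\phi_x(s_i)=(e_i,\,X_{2,i})$ and $\phi_z(s_i)=(Z_{1,i},\,Z_{2,i})$ for the $i$-th row, the symplectic product splits into a contribution from the first $n-k$ columns (governed by the identity block) and one from the last $k$ columns, giving
\begin{equation}
\phi_\infty(s_i)\odot\phi_\infty(s_j)=(Z_1)_{ji}-(Z_1)_{ij}+\sum_{m=1}^{k}\left[(Z_2)_{jm}(X_2)_{im}-(X_2)_{jm}(Z_2)_{im}\right].
\end{equation}
Since the entries of $Z_1$ lie in $\{0,\ldots,q-1\}$, the difference $(Z_1)_{ji}-(Z_1)_{ij}$ has magnitude at most $q-1$; each summand is a difference of two products of entries bounded by $q-1$, hence has magnitude at most $(q-1)^2$, and there are $k$ of them. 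Summing these bounds yields $|\phi_\infty(s_i)\odot\phi_\infty(s_j)|\leq (q-1)+k(q-1)^2=(1+k(q-1))(q-1)$, which already establishes the general bound $B\leq\max_{i,j}|\phi_\infty(s_i)\odot\phi_\infty(s_j)|$ (understood as dominating $q-1$ in the regime where the code is not already LDI).

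Finally I would pass to the LDI form built from $L^{(-)}$, which is the lower-triangular matrix whose nonzero entries are the negative symplectic products. Because every entry of $L^{(-)}$ is nonpositive, each entry of $Z_1+L^{(-)}$ equals a value in $[0,q-1]$ plus a nonpositive quantity no smaller than $-(1+k(q-1))(q-1)$; its upper side therefore never exceeds $q-1$ and its lower side never falls below $-(1+k(q-1))(q-1)$, so $|(Z_1+L^{(-)})_{ij}|\leq(1+k(q-1))(q-1)$, giving the stated bound on $B$. The main obstacle — and the reason for preferring $L^{(-)}$ over the original $L$ of Theorem \ref{inv} — is precisely this sign bookkeeping: if one added the full symplectic products, a positive product would stack on top of a $Z_1$ entry and inflate the bound to roughly $(q-1)+(1+k(q-1))(q-1)$, whereas restricting to the negative part ensures the two potential sources of large magnitude (the residual $Z_1$ entry and the subtracted symplectic term) never reinforce one another, so no doubling occurs.
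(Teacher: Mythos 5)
Your proposal is correct and follows essentially the same route as the paper: put the code in canonical form, note that only the $Z_1$ block is modified, and use $L^{(-)}$ so that each modified entry is a value in $\mathbb{Z}_q$ minus the magnitude of a symplectic product, which is itself bounded by $(q-1)+k(q-1)^2=(1+k(q-1))(q-1)$. The paper states this in one terse sentence; your explicit block-wise computation of $\phi_\infty(s_i)\odot\phi_\infty(s_j)$ and the sign bookkeeping for $L^{(-)}$ simply fill in the details it leaves implicit.
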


Upon putting the code into canonical form this follows immediately from the definition of $L^{(-)}$ as each entry will be whatever value was already in that location (values in $\mathbb{Z}_q$) minus the absolute value of the inner product, which will be at most an absolute value of the inner product. While this is a small improvement on the value of $B$, since it's the base of an exponential expression this amounts to a larger improvement in the overall cutoff value.

We will now move to proving an alternative bound on the local-dimension needed in order to promise the distance is at least preserved. The first proof of the cutoff bound for the distance promise for LDI codes used random permutations of the entries in $\phi_\infty$. Here we utilize the structure of the symplectic product as well as that of the partitions of the code in terms of its $X$ component and $Z$ component to obtain an alternative bound for all non-degenerate codes. While this bound is looser when $d$ increases, for small $d$ and large $k$ this bound will typically be roughly quadraticly smaller. In particular we will show:

\begin{theorem}\label{improvedbound}
For all primes $p>p^*$ the distance of an LDI representation of a non-degenerate stabilizer code $[[n,k,d]]_q$ over $p$ bases, $[[n,k,d']]_p$, has $d'\geq d$, where we may use as $p^*$ the value:
\begin{equation}
    (B(q-1)(d-1)(1+(d-1)^2(q-1)^{d-1}(d-2)^{(d-2)/2}))^{d-1},
\end{equation}
with $q$ the initial local-dimension, $d$ the distance of the initial code, and $B$ the maximal entry in the $\phi_\infty$ representation of the code.
\end{theorem}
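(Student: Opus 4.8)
The plan is to show that, for a fixed support, the distance can only drop when we reinterpret the code over $\mathbb{Z}_p$ if some fixed nonzero integer --- a minor of the code matrix --- is divisible by $p$, and then to bound the absolute value of that integer. First I would reformulate distance preservation as a linear-algebra condition. Fix a candidate support $T$ of $w\le d-1$ qudits. An undetectable error supported on $T$ is exactly a nonzero vector in the kernel (mod $p$) of the $(n-k)\times 2w$ matrix $M_T$ obtained by restricting the symplectic commutation conditions $\phi_\infty(s_i)\odot(\cdot)$ to the $2w$ columns indexed by $T$. Because the code is non-degenerate, every nonzero stabilizer element has weight $\ge d>w$, so no nonzero stabilizer element is supported inside $T$; hence any nonzero kernel vector is automatically an undetectable error that is not in the stabilizer, and ruling these out for all $|T|\le d-1$ is equivalent to the distance being at least $d$. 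The quantum Singleton bound $n-k\ge 2(d-1)\ge 2w$ guarantees that $M_T$ has at least as many rows as columns, so having trivial kernel is the generic situation.

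Next I would pass from $\mathbb{Z}_q$ through $\mathbb{Z}$ to $\mathbb{Z}_p$ via minors. Since the original $[[n,k,d]]_q$ code has distance $d$, the matrix $M_T$ has trivial kernel mod $q$, so at least one $2w\times 2w$ minor $\Delta_T$ is nonzero mod $q$; as the LDI entries are integers, $\Delta_T$ is then a nonzero integer. A new low-weight error appears over $\mathbb{Z}_p$ only if $M_T$ acquires a kernel mod $p$, i.e.\ only if $p$ divides every maximal minor, and in particular $p\mid\Delta_T$. It therefore suffices to choose $p$ larger than $|\Delta_T|$ for a suitable nonzero minor and for every support $T$, so the entire problem reduces to an absolute-value bound on an integer determinant built from entries of $\phi_\infty(S)$, each bounded by $B$.

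The improvement over the old $B^{2(d-1)}(2(d-1))^{d-1}$ estimate comes from exploiting the canonical form and the $X/Z$ partition to collapse the $2w$-dimensional minor. Writing the LDI code as $[I_{n-k}\ X_2\mid Z_1+L\ \ Z_2]$, the identity block forces the columns of $M_T$ coming from the $X$-part of the pivot qudits to be standard basis vectors. Laplace-expanding $\Delta_T$ along these standard columns eliminates them together with an equal number of rows, leaving a determinant whose entries split into two regimes: \emph{large} entries of size at most $B$, arising only from the pivot $Z_1+L$ block, and \emph{small} entries of size at most $q-1$, arising from $X_2$ and $Z_2$. The large entries occupy at most $d-1$ columns, so a further generalized (complementary-minor) Laplace expansion that separates the large-column block from the small-column block isolates a factor of at most $B^{d-1}$ and bounds the complementary small block by Hadamard's inequality applied to a $(d-2)$-dimensional matrix with entries $\le q-1$; this is precisely where $(q-1)^{d-1}(d-2)^{(d-2)/2}$ enters. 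Collecting the combinatorial choices into the prefactor $(q-1)(d-1)$ and the additive correction $(d-1)^2(q-1)^{d-1}(d-2)^{(d-2)/2}$, and packaging the estimate per row, is what should yield the stated $p^*=(B(q-1)(d-1)(1+(d-1)^2(q-1)^{d-1}(d-2)^{(d-2)/2}))^{d-1}$.

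I expect the main obstacle to be the bookkeeping in this last reduction: correctly collapsing the $2(d-1)$ symplectic unknowns down to a determinant of dimension $d-1$, tracking exactly how the free-qudit (small-entry) columns contribute the additive $1+\cdots$ correction rather than inflating the power of $B$, and making the combinatorial constants land on the stated expression rather than some looser relative. A secondary technical point is uniformity: the argument must hold simultaneously over all $\binom{n}{w}$ supports $T$ and all $w\le d-1$, so that a single threshold $p^*$ works at once. This is where, for each $T$, the choice of a minor that is nonzero mod $q$ (guaranteed by the distance hypothesis together with non-degeneracy) must be made consistently, and it is also the step that fails for degenerate codes, explaining why that case must be treated separately in Theorem~\ref{degen}.
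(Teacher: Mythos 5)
Your first two paragraphs are, in substance, the paper's own argument made more explicit: the paper phrases the reduction as a dichotomy between ``unavoidable'' errors (the relevant $2j\times 2j$ minor vanishes over $\mathbb{Z}$) and ``artifact'' errors (the minor is a nonzero integer that happens to be $\equiv 0 \bmod p$), and concludes, exactly as you do, that it suffices to take $p$ larger than every such minor. Your appeal to non-degeneracy to produce a minor that is nonzero mod $q$ (hence a nonzero integer), your use of the Singleton bound to ensure $M_T$ has enough rows, and your remark that precisely this step fails for degenerate codes are the correct, implicit content of that reduction. Where you genuinely diverge is the determinant estimate. The paper writes the minor in block form with $X_1,X_2$ (entries $\le q-1$) and $Z_1,Z_2$ (entries $\le B$) and applies the Schur-complement identity $\det = \det(X_1)\det(Z_2-X_2X_1^{-1}Z_1)$, bounding the entries of $X_1^{-1}$ by the largest cofactor $\tilde{C}\le (q-1)^{d-2}(d-2)^{(d-2)/2}$ and then applying Hadamard to each factor; the stated cutoff is literally $\bigl((q-1)(d-1)B(1+(q-1)\tilde{C}(d-1)^2)\bigr)^{d-1}$. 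So the additive ``$1+\cdots$'' and the factor $(d-2)^{(d-2)/2}$ are fingerprints of the Schur complement --- the latter is the Hadamard bound on a $(d-2)\times(d-2)$ cofactor of $X_1$ --- not of any Laplace expansion. Your route (expand along the standard-basis columns supplied by the identity block, then Hadamard on the remaining mixed matrix) is sound, sidesteps the invertibility of $X_1$ that the paper assumes without justification, and yields a purely multiplicative bound of the shape $B^{a}(q-1)^{2(w-a)}(2w-a)^{(2w-a)/2}$, with $a$ the number of pivot qudits in the support --- in most regimes tighter than either bound in the paper.

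The one concrete misstep is your claim that this bookkeeping ``should yield the stated $p^*$.'' It will not: a complementary-minor expansion separating the $B$-columns from the $(q-1)$-columns produces a sum of $\binom{2w-a}{a}$ products of two Hadamard bounds, and no rearrangement of that sum produces the additive correction $1+(d-1)^2(q-1)^{d-1}(d-2)^{(d-2)/2}$; that expression only arises from bounding the entries of the Schur complement $Z_2-X_2X_1^{-1}Z_1$ by $B(1+(q-1)\tilde{C}(d-1)^2)$. This does not sink your proof, since a smaller valid cutoff trivially implies the theorem for any larger stated one, but to prove the theorem as written you must finish with the (easy) verification that your bound is dominated by the stated $p^*$ --- using that the entries of $X_2$ and $Z_2$ are bounded by $\min(B,q-1)$ while only those of $Z_1+L$ require the bound $B$ --- rather than asserting that the two formulas coincide. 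Alternatively, replace your last step by the paper's Schur-complement computation, which lands on the stated expression exactly.
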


To make claims about the distance of the code we begin by breaking down the set of undetectable errors into two sets. These definitions highlight the subtle possibility of the distance reducing upon changing the local-dimension.

\begin{definition}
An unavoidable error is an error that commutes with all stabilizers and produces the $\vec{0}$ syndrome over the integers.
\end{definition}

These correspond to undetectable errors that would remain undetectable regardless of the number of bases for the code since they always exactly commute under the symplectic inner product with all stabilizer generators--and so all members of the stabilizer group. Since these errors are always undetectable we call them unavoidable errors since changing the number of bases would not allow this code to detect this error.

We also define the other possible kind of undetectable error for a given number of bases, which corresponds to the case where some syndromes are multiples of the number of bases:

\begin{definition}
An artifact error is an error that commutes with all stabilizers but produces at least one syndrome that is only zero modulo the base.
\end{definition}

These are named artifact errors as their undetectability is an artifact of the number of bases selected and could become detectable if a different number of bases were used with this code. Each undetectable error is either an unavoidable error or an artifact error. We utilize this fact to show our theorem.

\begin{proof}

Let us begin with a code with local-dimension $q$ and apply it to a system with local-dimension $p$. The errors for the original code are the vectors in the kernel of $\phi_q$ for the code. These errors are either unavoidable errors or are artifact errors. \if{false} We may rearrange the rows and columns so that the stabilizers and registers that generate these entries that are nonzero multiples of $q$ are the upper left $2d\times 2d$ minor, padding with identities if needed. The factor of 2 occurs due to the number of nonzero entries in $\phi_\infty$ being up to double the weight of the Pauli.\fi The stabilizers that generate these multiples of $q$ entries in the syndrome are members of the null space of the minor formed using the corresponding stabilizers.

Now, consider the extension of the code to $p$ bases. Building up the qudit Pauli operators by weight $j$, we consider the minors of the matrix. These minors of size $2j\times 2j$ can have a nontrivial null space in two possible ways:
\begin{itemize}
    \item If the determinant is 0 over the integers then this is either an unavoidable error or an error whose existence did not occur due to the choice of the number of bases.
    \item If the determinant is not 0 over the integers, but takes the value of some multiple of $p$, then it's $0\mod p$ and so a null space exists.
\end{itemize}
Thus we can only introduce artifact errors to decrease the distance. By bounding the determinant by $p^*$, any choice of $p>p^*$ will ensure that the determinant is a unit in $\mathbb{Z}_p$, and hence have a trivial null space since the matrix is invertible.

We next utilize the structure of the symplectic product more heavily in order to reduce the cutoff local-dimension. Note that for a pair of Paulis in the $\phi$ representation, we may write:
\begin{eqnarray}
    \phi(s_1)\odot \phi(s_2)&=&\phi(s_1)\begin{bmatrix} 0 & -I_n\\ I_n & 0 \end{bmatrix} \phi(s_2)^T\\
    &:=&\phi(s_1)g \phi(s_2)^T
\end{eqnarray}
and so we may consider the commutation for the generators with some Pauli $u$ as being given by $\bigoplus_{i=1}^{n-k} (\phi(s_i)g)\phi(u)^T$. This removes the distinction between the two components and allows the symplectic product to act like the normal matrix-vector product. Now, notice that for any Pauli weight $j$ operator, we will have up to $j$ nonzero entries in the $X$ component of the $\phi$ representation and up to $j$ nonzero entries in the $Z$ component. This means that up to $j$ columns in each component will be involved in any commutator.

Next note that to ensure that an artifact error is not induced it suffices to ensure that there is a nontrivial kernel, induced by the local-dimension choice, which is ensured so long as any $2(d-1)\times 2(d-1)$ minor does not have a determinant which is congruent to the local-dimension. This can be promised by requiring the local-dimension to be larger than the largest possible determinant for such a matrix. Since there will be at most $j$ nonzero entries in each component it suffices to consider $j$ columns from each component and subsets of $2j$ rows of this.

From this reduction, we need only ensure that the local-dimension is larger than the largest possible determinant for this $2j\times 2j$ minor. Let us denote this minor by:
\begin{equation}
    \begin{bmatrix}
    X_1 & Z_1\\
    X_2 & Z_2 
    \end{bmatrix},
\end{equation}
where each block has dimensions $j\times j$. The maximal entries are $q-1$ for $X_1$ and $X_2$, whereas for $Z_1$ and $Z_2$ it is bounded by $B$. We now use the block matrix identity:
\begin{equation}
    det\begin{bmatrix}
    X_1 & Z_1\\
    X_2 & Z_2
    \end{bmatrix}=det(X_1)det(Z_2-X_2X_1^{-1}Z_1).
\end{equation}

Since all entries in $X_1$ are integers and the determinant is, by construction, nonzero, the maximal entry in $X_1^{-1}$ will be at most that of the largest cofactor of $X_1$. The largest cofactor, $\tilde{C}$, will be at most $(q-1)^{d-2}(d-2)^{(d-2)/2}$, as provided by Hadamard's inequality. The largest entry in $Z_2-X_2X_1^{-1}Z_1$ is then upper bounded by $B(1+(q-1)\tilde{C}(d-1)^2)$. From here, we may apply Hadamard's inequality for determinants again using the given entry bounds, using that each block has dimensions up to $(d-1)\times (d-1)$, which provides $p^*=(q-1)^{d-1}(d-1)^{d-1}(B(1+(q-1)\tilde{C}(d-1)^2))^{d-1}$, or alternatively expressed in terms of our fundamental variables as
\begin{equation}
 (B(q-1)(d-1)(1+(d-1)^2(q-1)^{d-1}(d-2)^{(d-2)/2}))^{d-1}.
\end{equation}
In the case of $q=2$ this reduces to $(B(d-1)(1+(d-1)^2(d-2)^{(d-2)/2}))^{d-1}$.

Lastly, when $j=d$, we can either encounter an unavoidable error, in which case the distance of the code is $d$ or we could obtain an artifact error, also causing the distance to be $d$. It is possible that neither of these occur at $j=d$, in which case the distance becomes some $d'$ with $d<d'\leq d^*$, with $d^*$ being the distance of the code over the integers. 
\end{proof}

Before concluding this section, we provide a brief comparison of this bound to the original one of $B^{2(d-1)}(2(d-1))^{(d-1)}$. The new bound only depends on $B^{d-1}$ opposed to the original $B^{2(d-1)}$, which as the bound on $B$ depends on $k$ means that for codes, or code families, with larger $k$ values the new bound can provide a tighter expression. Unfortunately, however, this new bound is doubly-exponential in the distance of the code $d$, having a dependency of roughly $d^{d^2}$ opposed to the prior dependency of $d^d$, so if one is attempting to promise the distance of a code with a larger distance, this new bound is likely to be far less tight. In summary, this alternative bound is not per se better, however, since one may simply use whichever of the bounds is tighter this alternative bound may provide a lower requirement for the local-dimension needed in order to ensure that the distance of the code is at least preserved.

\if{false}
\begin{proof}

\if{false}
Sufficient to have:
\begin{equation}
    det(A\oplus B)\leq p
\end{equation}
but note that $det(X_d\oplus Z_d)=det(X_d)det(Z_d)$, which provides a large improvement. Goes from $2d\times 2d$ matrices to $d\times d$ and maximal entry $B$ to $B$ and $q$.

$((q-1)B)^{d-1}(d-1)^{d-1}$.\fi
\end{proof}
\fi

\section{Degenerate Codes}

Degenerate codes are a uniquely quantum phenomenon, however, are a crucial class of quantum error-correcting codes. For a degenerate quantum error-correcting code we must avoid undetectable errors, but also detectable errors which produce the same syndrome but do not map to the same logical codeword. Any LDPC-like quantum error-correcting code will be degenerate, as, equivalently, a quantum error-correcting code is degenerate if there is some stabilizer group member with lower weight than the distance of the code and by construction one would aim to have a high distance but still $O(1)$ weight for each generator. We show now that a similar distance promise may be made in the degenerate case as was possible in the non-degenerate case.

\begin{theorem}\label{degen}
The distance promise can also be made for degenerate codes.
\end{theorem}

\begin{proof} For the undetectable error case, this follows the same reasoning as the non-degenerate case of this theorem, so we only need to worry about two errors with the same syndrome mapping to different logical states.

Let $u$ and $v$ be two Paulis not in $S$ with weight at most $d-1$. We will prove that upon achieving local-dimension values at least as large as $p_D^*$, a cutoff value, it is impossible to introduce any new degenerate errors. Next, let $\phi_\infty(S)|_u$ and $\phi_\infty(S)|_v$ be the columns in the $\phi_\infty$ representation of the code that are used in determining the syndromes for $u$ and $v$, respectively. Further let $\phi_\infty(S)|_{u\cup v}$ be the union of the columns from $\phi_\infty(S)|_u$ and $\phi_\infty(S)|_v$. This will be up to $4(d-1)$ columns, all of which are not scalar multiples of any other column--so long as $d\geq 3$. If $\phi(u)$ and $\phi(v)$ have the same syndromes then it must be the case that $\phi_\infty(S)|_u$ and $\phi_\infty(S)|_v$ have some linearly dependent set of columns. As we are concerned with errors which are induced as degenerate errors under a change in the local-dimension we will omit the trivial linearly dependent columns (those which are the same columns up to a scalar multiple). Upon removing those columns we now only have $\phi_\infty(S)|_{u\cup v}$, which we aim to avoid having a linearly dependent set of columns over the new local-dimension. This can be avoided by not allowing the introduction of a nontrivial kernel for this matrix. From here the same determinant bound technique may be used. This provides $p_D^*$ as the $2(d-1)$ versions of the bounds:
\begin{multline}
    p_D^*=\min\{B^{4(d-1)}(4(d-1))^{2(d-1)},\\
((q-1)2(d-1)(B(1+(q-1)\tilde{C}(2(d-1))^2))^{2(d-1)}\},
\end{multline}
with $\tilde{C}\leq ((q-1)^2(2d-3))^{d-3/2}$. In totality, we must have $p>\max\{p^*,p_D^*\}$ in order to ensure the distance is preserved.
\end{proof}

\if{false}
For the degenerate case the concern is having two syndromes map to each other that weren't before. This can be avoided if $p>p_d^*$ where $p_d^*=2d(q-1)^2$ (which is the maximal value for a syndrome value (might be $B^2$ instead)). Not sure about this. Basically detectable errors are the only difference here. Is it always the case that $p_d^*<p^*$? If so then degenerate codes are no different from nondegenerate, in terms of proof. Recast as a difference of two Paulis, with a vector congruent to zero over $p$. The $2d\times 2d$ minors used can be independent so this will replace $B$ with $2B$, but otherwise be the same, I think. This suffices but is not strictly needed.

What if the difference is exactly zero. Is this already covered by the original code $q$? I believe so. Need to argue this one too.

No, this vector difference is not working either since this allows for differing vectors, it's not the case that a trivial null space suffices.

$Ax-By=0\mod p\Longleftrightarrow (A-B\tau)x=0\mod p$, with $\tau$ being some permutation (or restricted permutation) matrix? This permutes the rows and columns so that both matrices act on the same space. I think this patches the above.

\begin{lemma}
Reduction of $p^*$, maybe through linear combination argument--can we get $B$ down to $2q$?
\end{lemma}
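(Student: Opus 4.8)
The plan is to recast the question as a bounded integer-feasibility problem for the integer lift, since any reduction of $B$ feeds directly into $p^*$ (which depends on $B$ only as the base of an exponential). Write the sought LDI representation as $\phi_\infty(s_i)=\phi_q(s_i)+q\,w_i$, where $\phi_q(s_i)$ is the canonical representative with entries in $\{0,\dots,q-1\}$ and $w_i\in\mathbb{Z}^{2n}$ is a shift vector. Then ``$B\le 2q$'' is precisely the requirement that every $w_i$ have entries in $\{-1,0,1\}$. Using that the original code is valid over $\mathbb{Z}_q$, the lift-level products are exact multiples of $q$, say $\phi_q(s_i)\odot\phi_q(s_j)=q\,c_{ij}$, so expanding the LDI condition $\phi_\infty(s_i)\odot\phi_\infty(s_j)=0$ by bilinearity and dividing by $q$ yields
\begin{equation}
 c_{ij}+\big(\phi_q(s_i)\odot w_j+w_i\odot\phi_q(s_j)\big)+q\,(w_i\odot w_j)=0 ,
\end{equation}
a system of Diophantine equations indexed by the pairs $i>j$. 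The goal is to exhibit a solution with all $w_i$ ternary.

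First I would simplify the coupling. One clean route is to insist that the shifts be pairwise symplectically orthogonal, $w_i\odot w_j=0$, which eliminates the quadratic term and leaves the linear conditions $\phi_q(s_i)\odot w_j+w_i\odot\phi_q(s_j)=-c_{ij}$; alternatively one keeps $q\,(w_i\odot w_j)$ as an extra tunable integer that buys flexibility. I would then put the code into canonical form $[I_{n-k}\ X_2\ |\ Z_1\ Z_2]$ and use the freedom to compose generators by integer row operations. In canonical form, shifting a single $Z_1$ entry $z_i(\ell)$ with $\ell\le n-k$ moves \emph{only} the product $\phi(s_i)\odot\phi(s_\ell)$, because the corresponding $X$-block is the identity there; this is exactly the mechanism by which the standard construction of Theorem~\ref{inv} loads the entire correction $q\,c_{i\ell}$ onto one register, forcing entries of size $\Theta(kq^2)$. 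The improvement should come from \emph{distributing} each correction as $\pm q$ shifts across the many columns of the $X_2$ and $Z_2$ blocks, and across generators obtained by row-combination, so that the net change to each symplectic product still equals $-q\,c_{ij}$ while no individual entry exceeds $2q$. This is the linear-combination mechanism the lemma gestures at, and a crude range check is encouraging: a ternary $w_i$ can change a given product by up to $2n(q-1)$ in magnitude, which already exceeds the largest correction $|c_{ij}|\le(1+k(q-1))(q-1)/q$ whenever $n\ge k$, i.e. always.

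The main obstacle is \emph{simultaneity}, not per-pair range. Every $q$-shift placed in an $X_2$ or $Z_2$ column perturbs the products of several generator pairs at once, so the corrections for the $\binom{n-k}{2}$ pairs cannot be chosen independently; a single $w_i$ must solve the equations against all $j$ simultaneously. A dimension count (roughly $2n(n-k)$ ternary unknowns against $(n-k)^2/2$ constraints) makes the system look under-determined and hence plausibly feasible in both the low- and high-rate regimes, but the unknowns are confined to $\{-1,0,1\}$ while the equations must hold \emph{exactly} over $\mathbb{Z}$ (not merely modulo $q$), so this is a bounded-coefficient integer program whose solvability is not guaranteed by counting alone. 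I therefore expect the honest outcome to be one of two forms rather than a clean unconditional ``$B\le 2q$'': either a constructive distribution scheme giving $B\le C q$ for some modest constant $C$ (possibly conditional on sparsity, where there is ample room to spread the shifts), or an obstruction—an explicit high-rate family for which the ternary shift space provably cannot absorb all corrections at once, pinning the optimal base strictly above $2q$. Deciding between the constructive scheme and the obstruction is the crux, and settling it is where I expect the real work to lie.
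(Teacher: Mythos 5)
You should know at the outset that this ``lemma'' has no proof in the paper: it appears only inside an \verb|\if{false}| block as a note-to-self (phrased as a question, ``can we get $B$ down to $2q$?''), and the Discussion section explicitly lists reducing $B$ via systems of homogeneous linear Diophantine equations---citing exactly the surplus-of-variables counting you invoke---as \emph{future work}. So your proposal cannot be checked against a paper proof; it can only be judged as an attempted resolution of an open question, and judged that way it does not resolve it. Your setup is sound and coincides with the paper's intended route: writing $\phi_\infty(s_i)=\phi_q(s_i)+q\,w_i$, expanding the LDI condition bilinearly, using $\phi_q(s_i)\odot\phi_q(s_j)=q\,c_{ij}$, and dividing by $q$ is correct, and your observation that the canonical-form construction of Theorem~\ref{inv} concentrates each correction $q\,c_{ij}$ onto a single entry (forcing $B=\Theta(kq^2)$) correctly diagnoses where the known bound comes from. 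Two small inaccuracies: ternary $w_i$ is \emph{sufficient} for $B\le 2q$, not ``precisely'' equivalent to it (entries of $\phi_q(s_i)+q w_i$ land in $[-q,2q-1]$, and conversely $|$entry$|\le 2q$ permits some $w_i$ entries of magnitude $2$); and imposing $w_i\odot w_j=0$ to kill the quadratic term is an unjustified restriction of the feasible set, which matters when feasibility is the whole question.

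The genuine gap is the one you name yourself: you never establish that the bounded system is solvable, and your closing paragraph concedes that the crux---a constructive distribution scheme versus an explicit obstruction---is undecided. The dimension count ($\approx 2n(n-k)$ ternary unknowns against $\binom{n-k}{2}$ exact integer constraints) is heuristic only; underdetermined integer programs with variables confined to $\{-1,0,1\}$ can easily be infeasible, since each equation must hold exactly over $\mathbb{Z}$ and the achievable correction vectors form a finite set that need not cover the required $c_{ij}$ pattern simultaneously. Nothing in the proposal rules this out, and no family of codes is exhibited for which the scheme works, so neither $B\le 2q$ nor even $B\le Cq$ is proven, conditionally or otherwise. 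In short: your framing is the right one (and is, as far as the paper goes, the same one its author had in mind), but what you have written is a well-posed research plan with an honest assessment of its missing step, not a proof, and the missing step is precisely the content of the lemma.
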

\fi

This means that just like non-degenerate quantum codes, we may also promise the distance of the code in the degenerate case, albeit with a larger required cutoff value. While this cutoff value is large, it provides some local-dimension value beyond which the distance will be kept and bounds the set of local-dimension values for which the distance must be manually verified. This provides information about when the distance of the code must be preserved, however, if we apply a code over $q$ levels to a system with $p<q$ levels, is there some range of values for $p$ whereby we know that the distance must decrease? In the non-degenerate case, we denoted this by $p^{**}$, which was given by: \begin{equation}
    \sqrt{1+{n\choose t}^{1/((n-k)-t)}},\quad t=\left\lfloor\frac{d-1}{2} \right\rfloor.
\end{equation}
Whenever $p<p^{**}$, it must be the case that the distance of the code must decrease. The expression for $p^{**}$ was derived by using the generalized quantum Hamming bound, which holds for all non-degenerate codes, however, for degenerate codes this bound does not always hold . This means that for a general degenerate code we have the following Lemma:
\begin{lemma}
There is no corresponding $p_D^{**}$ that holds for arbitrary degenerate codes. 
\end{lemma}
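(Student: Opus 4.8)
The plan is to prove the non-existence of $p_D^{**}$ by reducing it to a single structural fact and then defeating it with a counterexample family. First I would recall that the value $p^{**}=\sqrt{1+\binom{n}{t}^{1/((n-k)-t)}}$ is not an intrinsic property of the LDI reduction procedure but merely a restatement of the generalized quantum Hamming bound: for $p<p^{**}$ there is simply no non-degenerate $[[n,k,d]]_p$ code at all, so the reduction of any $[[n,k,d]]_q$ code down to such a $p$ is forced to lose distance. A universal $p_D^{**}$ would therefore have to encode the analogous packing obstruction, namely that no degenerate $[[n,k,d]]_p$ code can exist for $p<p_D^{**}$. The Lemma is thus equivalent to showing that degenerate codes admit no such universal packing threshold.

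Second, I would exploit the fact already emphasized in the text that the quantum Hamming bound need not hold for degenerate codes. To turn this into a genuine non-existence statement I would show that for an arbitrary candidate threshold $V$ there exists a degenerate stabilizer code together with a prime $p<V$ at which that code still has distance $d$. The natural source is a family of degenerate codes whose parameters $[[n,k,d]]$ sit strictly above the non-degenerate Hamming packing at the chosen $p$: such a code cannot be non-degenerate at $p$, yet it exists precisely because degeneracy lets many weight-$\le t$ errors share a syndrome while correcting to a common logical state. I would search among LDPC-like or concatenated constructions with low-weight generators and $n-k$ small relative to $d$, since there the quantity $\binom{n}{t}^{1/((n-k)-t)}$, and hence the would-be threshold, can be driven large while a genuine distance-$d$ degenerate code persists at a fixed small $p$.

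Given such a code already realized over $p$, the LDI procedure of Theorem \ref{inv} produces an $[[n,k,d]]_q$ representative for any larger prime $q$ without changing its parameters, and reducing it element-wise modulo $p$ returns a degenerate code of distance $d$ even though $p<V$. Since $V$ was an arbitrary candidate value, no number can serve as a universal $p_D^{**}$, which establishes the Lemma. The contrast with the non-degenerate case is exactly the violation of the packing bound that underlies $p^{**}$.

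The main obstacle I anticipate is the explicit construction and verification step: I must exhibit a concrete degenerate family that provably attains distance $d$ at a fixed small prime $p$, is genuinely degenerate there, and has parameters forcing the non-degenerate threshold above $p$, so that the separation between the degenerate and non-degenerate regimes is unambiguous. Demonstrating that such codes beat the quantum Hamming packing at the relevant finite parameters, rather than merely in an asymptotic limit, is the delicate point, and it is precisely where the degenerate-versus-non-degenerate distinction carries the argument.
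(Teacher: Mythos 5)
Your strategy reads the Lemma as a literal non-existence claim and proposes to defeat every candidate threshold $V$ with a counterexample: a degenerate code, at some prime $p<V$, that retains its distance $d$ even though its parameters sit above the non-degenerate Hamming packing at $p$. The outline is coherent, but the load-bearing step --- actually exhibiting such a family --- is exactly what you defer to a ``search among LDPC-like or concatenated constructions,'' and this is a genuine gap, not a routine verification. A degenerate stabilizer code whose parameters provably violate the generalized quantum Hamming bound is not known to exist; whether degenerate codes can violate that bound at all is a long-standing open question, and the works the paper cites on this point (\cite{quditgeneral,sarvepalli2010degenerate}) are results showing classes of degenerate codes that \emph{obey} it. So your proof cannot be completed from known results, and completing it would settle an open problem rather than follow from the paper's framework. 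There is also a secondary wrinkle in your lifting step: the LDI form of a distance-$d$ degenerate code over $p$, read over a large prime $q$, is only promised distance $d'\geq d$; for the reduction back to $p$ to count as ``distance does not decrease'' you would further need $d'=d$, which you do not argue.

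The paper's own justification is far more modest than your reading: it observes that $p^{**}$ was \emph{derived from} the generalized quantum Hamming bound, which holds for all non-degenerate codes but cannot be invoked for arbitrary degenerate ones, and concludes that this derivation --- hence the expression --- has no counterpart $p_D^{**}$ valid for all degenerate codes. In other words, the paper establishes ``this derivation does not extend,'' not ``no threshold can exist''; the latter is what your proposal targets and is strictly harder. If you weaken your claim to match the paper's reading (the Hamming-bound argument is unavailable for arbitrary degenerate codes, while for those degenerate families known to satisfy the bound the same expression for $p^{**}$ still applies), the Lemma follows in one line without any counterexample construction.
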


While not all degenerate quantum codes obey the generalized quantum Hamming bound, there are certain code families which do \cite{quditgeneral,sarvepalli2010degenerate}. For those code families the exact same expression for $p^{**}$ holds as did for non-degenerate codes.

To ground the prior result, we provide an example next.

\begin{example}
For our example we consider the six qubit code, with parameters $[[6,1,3]]_2$, generated by an extension of the five qubit code. The generators for this code can be given by $\{YIZXYI,ZXIZYI,ZIXYZI,IIIIIX,IZZZZI\}$. In the $\phi_2$ representation this is given by:
\begin{equation}
\setcounter{MaxMatrixCols}{19}
    \begin{bmatrix}
    1 & 0 & 0 & 1 & 1 & 0 & | & 1 & 0 & 1 & 0 & 1 & 0\\
    0 & 1 & 0 & 0 & 1 & 0 & | & 1 & 0 & 0 & 1 & 1 & 0\\
    0 & 0 & 1 & 1 & 0 & 0 & | & 1 & 0 & 0 & 1 & 1 & 0\\
    0 & 0 & 0 & 0 & 0 & 1 & | & 0 & 0 & 0 & 0 & 0 & 0\\
    0 & 0 & 0 & 0 & 0 & 0 & | & 0 & 1 & 1 & 1 & 1 & 0\\
    \end{bmatrix}
\end{equation}
We perform the following operations to put the code into canonical form: swap rows $(4,5)$, $H$ on register $4$, swap registers $(5,6)$, then add row $4$ to rows $2$ and $3$, resulting in:
\begin{equation}
\setcounter{MaxMatrixCols}{19}
    \begin{bmatrix}
    1 & 0 & 0 & 0 & 0 & 1 & | & 1 & 0 & 1 & 1 & 0 & 1\\
    0 & 1 & 0 & 0 & 0 & 1 & | & 1 & 1 & 1 & 0 & 0 & 0\\
    0 & 0 & 1 & 0 & 0 & 0 & | & 1 & 1 & 1 & 1 & 0 & 0\\
    0 & 0 & 0 & 1 & 0 & 0 & | & 0 & 1 & 1 & 0 & 0 & 1\\
    0 & 0 & 0 & 0 & 1 & 0 & | & 0 & 0 & 0 & 0 & 0 & 0\\
    \end{bmatrix}
\end{equation}
\begin{equation}
    L=\begin{bmatrix}
    0 & 0 & 0 & 0 &0\\
    0 & 0 & 0 & 0 &0\\
    0 & 0 & 0 & 0 &0\\
    0 & -2 & 0 & 0 &0\\
    0 & 0 & 0 & 0 &0
    \end{bmatrix}
\end{equation}
\begin{equation}
\setcounter{MaxMatrixCols}{19}
    \begin{bmatrix}
    1 & 0 & 0 & 0 & 0 & 1 & | & 1 & 0 & 1 & 1 & 0 & 1\\
    0 & 1 & 0 & 0 & 0 & 1 & | & 1 & 1 & 1 & 0 & 0 & 0\\
    0 & 0 & 1 & 0 & 0 & 0 & | & 1 & 1 & 1 & 1 & 0 & 0\\
    0 & 0 & 0 & 1 & 0 & 0 & | & 0 & -1 & 1 & 0 & 0 & 1\\
    0 & 0 & 0 & 0 & 1 & 0 & | & 0 & 0 & 0 & 0 & 0 & 0\\
    \end{bmatrix}
\end{equation}
In this case $B=1$, and so the original bound based on random permutation of entries provides $p_D^*=8^4=4096$. This means that so long as more than $4096$ levels are used the distance of the code can be promised to at least remain the same. Below this value, unfortunately, one must apply numerical verification to ensure the distance is preserved.% Two errors with same syndrome: $Y_1X_2$ and $X_4$, I think (in original)... . $Y_1X_2^{-1}$ and $Z_4$ (in LDI form)

\end{example}
\if{false}
\begin{example}
9-qubit shor code, I guess
\end{example}
\fi

Lastly, for the logical operators of the local-dimension-invariant representation of degenerate code the same argument holds as was given in \cite{gunderman2020local}. With all of these pieces we have an equally complete description of degenerate LDI codes as existed for non-degenerate LDI codes.

\section{Discussion}

The local-dimension-invariant (LDI) representation of stabilizer codes allows these codes to be applied regardless of the local-dimension of the underlying system. When introduced only non-degenerate codes could be written in local-dimension-invariant form and have their distance promised to be at least preserved, once the system had sufficiently many levels. In this work we have shown an alternative bound for how many levels are needed for the distance to be promised. While this bound suffers a severe dependency on the distance of the code, it does provide a nearly quadratic improvement on the dependency of the largest entry in the LDI form of the code, given by $B$. So while this bound is less helpful in some cases than the original bound it can be a tighter bound in others.

Beyond this, this work has shown that the LDI representation's associated distance promise also exists for degenerate quantum codes, and so completes the application of this technique to both families of standard stabilizer codes. Degenerate codes are of particular appeal since they are not restricted by the generalized quantum Hamming bound and can at times protect more logical particles than permitted by non-degenerate codes for a given distance and number of physical particles.

Unfortunately, the utility of this method is somewhat limited as both bounds on the required local-dimension are quite large. In order to improve the practicality of this technique the value for $p^*$ and $p_D^*$ must be significantly decreased. One way to reduce these bounds is to reduce the expression for $B$, the maximal entry in the LDI representation. To do so, other analysis techniques will be needed beyond simple counting arguments. Since the LDI form for a code is not unique, one possible method may be to solve systems of homogeneous linear diophantine equations, which given the surplus of variables (additions to entries) compared to variables (requirement of commutations to be zero) is likely to yield far smaller bounds on $B$. A starting point for this might include the following works: \cite{griffiths1946note,givens1947parametric}.

The results shown here extend the utility of local-dimension-invariant stabilizer codes, and so naturally there are questions as to what other uses this technique will have. Is it possible to apply this technique to show some foundational aspect of quantum measurements? Can this technique in some way be used for other varieties of stabilizer like codes, such as Entanglement-Assisted Quantum Error-Correcting Codes \cite{brun2006correcting,wilde2008optimal}? If this method can be applied in this situation it is possible that it could remove the need for entanglement use in these codes, so long as the local-dimension is altered. However, even still, the local-dimension required would likely be quite large so the importance of decreasing the bounds for $p^*$ and $p_D^*$ would become that much more.

%\section{Definitions}
%In this section, we recall common definitions and results for qudit operators. A qubit is defined as a two level system with states $|0\rangle$ and $|1\rangle$. We define a qudit as being a quantum system over $q$ levels, where $q$ is prime. Throughout we take $\mathbb{Z}_q$ as the set $\{0,1,\ldots,q-1\}$.

\section*{Acknowledgments}

We thank Andrew Jena and David Cory for helpful comments.

\section*{Funding}
This work was supported by Industry Canada, the Canada
First Research Excellence Fund (CFREF), the Canadian Excellence Research Chairs (CERC 215284) program, the Natural Sciences and Engineering Research Council of Canada
(NSERC RGPIN-418579) Discovery program, the Canadian
Institute for Advanced Research (CIFAR), and the Province
of Ontario.

\if{false}

\section{Junk for this, but maybe useful for the future}

\section{ILP}

The problem can be stated as:
\begin{eqnarray}
    \min_{\{B_i\}} \max_{i,j} |B_i|_2^2 &&\\
    (\phi_q(s_i)+B_iq)\odot (\phi_q(s_j)+B_jq)&=&0\\
    B_i\in \mathbb{Z}^{2n}
\end{eqnarray}

\begin{definition}
$L^+=\phi(s_i)\odot \phi(s_j)$ pick the elements that are positive for $i<j$ and otherwise pick $j>i$. This provides the determinant theorem I want. That bound doesn't actually help...
\end{definition}

    $q$, $B-q^2d^2B$ gives $q^dd^d(B(q^2d^2+1))^d$ with ratio:
    
    \begin{eqnarray}
    & &\frac{(q-1)^{d-1}(d-1)^{d-1}(B((q-1)^2d^2+1))^{d-1}}{B^{2(d-1)}(2(d-1))^{(d-1)}}\\
    &=&\frac{(q-1)^{d-1}(((q-1)^2d^2+1))^{d-1}}{B^{(d-1)}2^{(d-1)}}\\
    &=&\left(\frac{(q-1)^2d^2+1}{2(2+k(q-1))}\right)^{d-1}
    \end{eqnarray}
this isn't always smaller than one, but often times is.

$qBd$ $q^3Bd^3+qBd$ gives $(qBd)^d(Bq^3d^3)^d$ waaay worse

\begin{proof}
%The ordering of the stabilizers and the ordering of the registers does not alter the distance of the code. With this, $\phi_\infty$ for the stabilizer generators over the integers can have the rows and columns arbitrarily swapped.

Let us begin with a code over $q$ bases and extend it to $p$ bases. The errors for the original code are the vectors in the kernel of $\phi_q$ for the code. These errors are either unavoidable errors or are artifact errors. \if{false} We may rearrange the rows and columns so that the stabilizers and registers that generate these entries that are nonzero multiples of $q$ are the upper left $2d\times 2d$ minor, padding with identities if needed. The factor of 2 occurs due to the number of nonzero entries in $\phi_\infty$ being up to double the weight of the Pauli.\fi The stabilizer(s) that generate these multiples of $q$ entries in the syndrome are members of the null space of the minor formed using the corresponding stabilizer(s).

Now, consider the extension of the code to $p$ bases. Building up the qudit Pauli operators by weight $j$, we consider the minors of the matrix. These minors of size $2j\times 2j$ can have a nontrivial null space in two possible ways:
\begin{itemize}
    \item If the determinant is 0 over the integers then this is either an unavoidable error or an error whose existence did not occur due to the choice of the number of bases.
    \item If the determinant is not 0 over the integers, but takes the value of some multiple of $p$, then it's $0\mod p$ and so a null space exists.
\end{itemize}
Thus we can only introduce artifact errors to decrease the distance. By bounding the determinant by $p^*$, any choice of $p>p^*$ will ensure that the determinant is a unit in $\mathbb{Z}_p$, and hence have a trivial null space since the matrix is invertible.

We utilize the structure of the symplectic product more heavily in order to reduce the cutoff dimension. Note that for a pair of Paulis in the $\phi$ representation, we may write:
\begin{eqnarray}
    \phi(s_1)\odot \phi(s_2)&=&\phi(s_1)\begin{bmatrix} 0 & -I_n\\ I_n & 0 \end{bmatrix} \phi(s_2)^T\\
    &:=&\phi(s_1)g \phi(s_2)^T
\end{eqnarray}
and so we may consider the commutation for the generators with some Pauli $u$ as being given by $\bigoplus_{i=1}^{n-k} (\phi(s_i)g)\phi(u)^T$. Now, notice that for any Pauli weight $j$ operator, we will have up to $j$ nonzero entries in the $X$ component of the $\phi$ representation and up to $j$ nonzero entries in the $Z$ component. This means that up to $j$ columns in each component will be involved in any commutator, and so we may consider these $j$ columns alone for this operator and the generators will still be a direct sum: $[X_j\ |\ Z_j]g=[X_j\bigoplus Z_j]$.

Next note that to ensure that an artifact error is not induced it suffices to ensure that there is a nontrivial kernel, induced by the local-dimension choice, for $[X_j\bigoplus Z_j]$, which is ensured by (Kramers/Sylvesters rule) so long as any $2(d-1)\times 2(d-1)$ minor does not have a determinant which is congruent to the local-dimension. This can be promised by requiring the local dimension to be larger than the largest possible determinant for such a matrix: $p> det(X_j\bigoplus Z_j)$.

Through the partition theorem, we may write $X_j\bigoplus Z_j=(X_j\bigoplus I)(I\bigoplus Z_j)$ and so:
\begin{eqnarray}
|det([X_j\ |\ Z_j]g)|&=&|det(X_j\bigoplus Z_j)|\\
&=&|det(X_j)||det(Z_j)|.
\end{eqnarray} From here, we may apply Hadamard's inequality for determinants using the fact that the maximal entries are $q-1$ for $X_j$ and $B$ for $Z_j$, and each has dimension up to $(d-1)\times (d-1)$, which provides:
\begin{equation}
    p^*=((q-1)B)^{d-1}(d-1)^{d-1}
\end{equation}

\if{false}
Now, in order to guarantee that the value of $p$ is at least as large as the determinant, we can use Hadamard's inequality to obtain:
\begin{equation}
    p> p^* =B^{2(d-1)}(2(d-1))^{(d-1)}
\end{equation}
where $B$ is the maximal entry in $\phi_\infty$. Since we only need to ensure that the artifact induced null space is trivial for Paulis with weight less than $d$, we used this identity with $2(d-1)\times 2(d-1)$ matrices.
\fi

Lastly, when $j=d$, we can either encounter an unavoidable error, in which case the distance of the code is $d$ or we could obtain an artifact error, also causing the distance to be $d$. It is possible that neither of these occur at $j=d$, in which case the distance becomes some $d'$ with $d<d'\leq d^*$. 
\end{proof}

$Bd$ in Hadamard's inequality from $L^{(+)}$ since both halves are positive matrices provides $(Bd)^dd^{d/2}$.

\begin{theorem}\label{degen}
The distance promise can also be made for degenerate codes.
\end{theorem}

\begin{proof}
For the undetectable error case, this follows the same reasoning as the non-degenerate case of this theorem, so we only need to worry about two errors with the same syndrome mapping to different logical states.

For the degenerate syndrome error case, let $v$ and $u$ be two Pauli operators, with the same syndrome values, given by $\mathcal{J}$, with weight at most $d$. Let $\mathcal{A}$ be the (at most) $(d-1)$ registers such that $\mathcal{A}\cdot \phi(v)=\mathcal{J}$. Let $\mathcal{B}$ be the (at most) $(d-1)$ registers such that $\mathcal{B}\cdot \phi(u)=\mathcal{J}$. Generally $\phi(v)$ and $\phi(u)$ represent at least partly disjoint registers, however, if we apply a permutation $\tau$ on the registers in $u$ we may make the register locations seem to match: $\mathcal{B}\cdot \phi(u)=(\mathcal{B}\tau^{-1})\cdot (\tau\phi(u))$. Through this permutation $\tau\phi(u)$ and $\phi(v)$ represent the same registers. It suffices then to just prevent any linear combination of the columns in each of the $X$ and $Z$ components of $\mathcal{A}$ and $\mathcal{B}$ from having a non-trivial kernel due to the change in the local-dimension. We then require that:
\begin{equation}
    det(\mathcal{A}-\mathcal{B})<p,
\end{equation}
with $p$ being the new local-dimension. We may bound the determinant as before, although the maximal entries have been double, providing $p_D^*=(4(q-1)B(d-1))^{(d-1)}$.

While that suffices, it is not quite necessary. In many cases this value can be reduced. We may avoid introducing new degenerate syndrome values by ensuring that the local-dimension is larger than the largest possible syndrome value (over the integers), and so we also have $p_D^*=(d-1)(B+(q-1))$ as a valid bound. We may take the tighter of these two as our requirement to ensure the degeneracies do not cause an error which reduces the distance, meaning:
\begin{multline}p_D^*=\min\{(4(q-1)B(d-1))^{(d-1)},\\ (d-1)(B+(q-1))\}.\end{multline}
As $4B(q-1)\geq B+(q-1)$, the latter of these is always at most the size of the former, meaning that $p_D^*=(d-1)(B+(q-1))$ suffices.

In totality, we must have $p>\max\{p^*,p_D^*\}$ in order to ensure the distance is preserved.

\if{false}
It suffices, although is not necessary, to avoid allowing $(A-B\tau^{-1})$ to have a nontrivial nullspace that is introduced by changing the local-dimension. As in the undetectable error case, we may reduce this to bounding the determinant. In this case the maximal entry is up to twice as large, so we obtain $p_d^*=$.

For the degenerate syndrome error case, let $v$ and $u$ be two Pauli operators, with the same syndrome values, given by $\mathcal{J}$, with weight at most $d$. Let $A$ be the (at most) $(2d)\times (2d)$ matrix minor such that $A\cdot \phi(v)=\mathcal{J}$. Let $B$ be the (at most) $(2d)\times (2d)$ matrix minor in $C$ such that $B\cdot \phi(u)=\mathcal{J}$. $\phi(v)$ and $\phi(u)$ represent at least partly disjoint registers, however, if we apply a permutation $\tau$ on the registers in $u$ we may make the registers seem to match: $B\cdot \phi(u)=(B\tau^{-1})\cdot (\tau\phi(u))$. Through this permutation $\tau\phi(u)$ and $\phi(v)$ represent the same registers and have the same values (upon applying a scaling to the columns via a Clifford operation)--it suffices to just prevent any column combos. We then have that:
\begin{equation}
    (A-B\tau^{-1})\phi(v)=0.
\end{equation}
It suffices, although is not necessary, to avoid allowing $(A-B\tau^{-1})$ to have a nontrivial nullspace that is introduced by changing the local-dimension. As in the undetectable error case, we may reduce this to bounding the determinant. In this case the maximal entry is up to twice as large, so we obtain $p_d^*=$.

Also $p>(n-k-1)B+(n+k+1)(q-1)$ would suffice since the syndromes can only be that large at most.
\fi
\end{proof}

\begin{remark}
It's perfectly possible to turn a degenerate QECC into a non-degenerate one. Is the reverse possible? Yes, but avoided due to $p^*$ requirement, I think.
\end{remark}

\begin{corollary}
Given the bounds proven here, for all $q$ when $d\geq 3$ and for $q>2$ when $d\leq 2$ the same cutoff value is sufficient.
\end{corollary}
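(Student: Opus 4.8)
The plan is to prove the Corollary by a direct comparison of the two explicit cutoff expressions already in hand: the non-degenerate value $p^*$ from Theorem~\ref{improvedbound} and the degenerate value $p_D^*$ from Theorem~\ref{degen}. Since the distance is certified to be at least preserved whenever $p > \max\{p^*, p_D^*\}$, it suffices to show that in the stated regimes one of the two bounds dominates the other, so that this maximum collapses to a single (``the same'') cutoff value rather than forcing both to be tracked separately. Under the natural reading this means establishing $p_D^* \le p^*$, so that the single requirement $p > p^*$ already certifies the degenerate distance promise.

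First I would substitute the closed forms and cancel the factors common to both expressions, reducing the inequality $p_D^* \le p^*$ to an elementary inequality in the three fundamental parameters: the prime $q \ge 2$, the distance $d$, and the maximal entry $B \ge 1$ (with the earlier Lemma bounding $B$ from above by $(1+k(q-1))(q-1)$ available if a parameter-free statement is wanted). Because both cutoffs are powers with exponents affine in $d-1$, the comparison is cleanest after forming the ratio $p_D^*/p^*$ and examining its behaviour as a function of $d$ for fixed $q$ and $B$.

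Next I would case-split on the distance. For $d \ge 3$ the exponent $d-1 \ge 2$ leaves enough slack that the dominance inequality holds for every prime $q$ and every admissible $B$; here I expect a simple growth argument (each additional factor of roughly $(q-1)B(d-1)$ on the larger side outpaces the corresponding factor on the smaller side) to close the case with room to spare. The delicate regime is $d \le 2$, where the exponents collapse with $d-1 \in \{0,1\}$: the $d=1$ instance is vacuous since there are no nonzero errors of weight below $1$, while for $d=2$ the two cutoffs become comparable in magnitude and the inequality tightens. It is precisely here that the hypothesis $q > 2$ enters, supplying a surplus factor (morally a $q-2 \ge 1$) that tips the comparison in the required direction; at $q = 2$, $d \le 2$ this surplus vanishes and the clean single-cutoff conclusion can fail, which is exactly why that corner is excluded.

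I expect the main obstacle to be this boundary case $d = 2$ with small $q$ and small $B$, where the dominance is marginal: the argument must track the smallest admissible values of $B$ and $q$ explicitly rather than leaning on asymptotic slack, and the exact role of the restriction $q > 2$ must be pinned down so that the excluded $q = 2$, $d \le 2$ corner is seen to be genuinely necessary and not merely an artifact of a loose estimate. Once that corner is handled, everything else reduces to the routine power-comparison sketched above.
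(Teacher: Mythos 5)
Your reading of the statement (collapse $\max\{p^*,p_D^*\}$ to a single cutoff by proving $p_D^*\le p^*$) and your case structure ($d=1$ trivial, $d=2$ requiring $q>2$, $d\ge 3$ by a growth argument) coincide with the paper's own proof. The genuine gap is the central inequality itself: with the cutoffs you propose to compare --- $p^*$ from Theorem~\ref{improvedbound} and $p_D^*$ from Theorem~\ref{degen} --- the dominance $p_D^*\le p^*$ for $d\ge 3$ is \emph{false}, so the ``simple growth argument'' you defer to cannot exist. The degenerate cutoff in Theorem~\ref{degen} comes from determinants of minors spanning the columns of \emph{two} weight-$(d-1)$ errors, hence carries exponents $2(d-1)$ and $4(d-1)$, whereas $p^*$ carries only the exponent $d-1$; generically the dominance therefore runs the other way. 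Concretely, for the paper's own six-qubit example ($q=2$, $B=1$, $d=3$) one gets $p^*=(2\cdot 5)^2=100$ while $p_D^*=\min\{8^4,\dots\}=4096$, and the discrepancy only worsens as $B$ grows, since $p_D^*$ scales like $B^{2(d-1)}$ or $B^{4(d-1)}$ against $B^{d-1}$ for $p^*$.

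What the paper actually does is compare $p^*$ against a \emph{different, much smaller} degeneracy cutoff, which is what the corollary's phrase ``the bounds proven here'' refers to: a new degenerate collision requires some syndrome entry that is nonzero over the integers to become $0 \bmod p$, so it suffices for $p$ to exceed the largest possible integer syndrome value, giving the linear expression $p_D^*=(d-1)(B+(q-1))$ (and the paper pairs it with $p^*=((q-1)B)^{d-1}(d-1)^{d-1}$). With these expressions the comparison is elementary and matches your outline: $d=1$ gives $p^*=p_D^*=0$; $d=2$ reduces to $(q-1)B\ge B+(q-1)$, which is exactly where $q>2$ enters; and $d=3$ is checked directly via $((q-1)B(d-1))^2\ge (d-1)(B+(q-1))$, with all larger $d$ following by monotonicity of the exponential side. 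So the missing idea is not sharper bookkeeping in the $d=2$ corner, but the alternative linear-in-$B$ bound for the degenerate case; without it, the statement you set out to prove is not true of the cutoffs you chose to compare.
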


\begin{proof}
Consider $d=1$, in which case $p_D^*=p^*=0$.

When $d=2$, in order for the same cutoff value to suffice, we would need:
\begin{equation}
    (q-1)B\geq B+(q-1),
\end{equation}
which only holds if $q>2$.

For $d=3$ we have:
\begin{eqnarray}
    p^*&=&(d-1)^2(q-1)^2B^2\\
    &\geq& (d-1)[B^2+(q-1)^2+(d-3)(B+(q-1))]\\
    &\geq& (d-1)(B+(q-1))\\
    &=& p_D^*
\end{eqnarray}
Since this held for $d=3$, by monotonicity of exponentiation by a power greater than $1$ this will hold for all $d\geq 3$.

\end{proof}
\fi
\bibliographystyle{unsrt}
\phantomsection  
\renewcommand*{\bibname}{References}

\bibliography{main}

\begin{thebibliography}{10}

\bibitem{lidar2013quantum}
Daniel~A Lidar and Todd~A Brun.
\newblock {\em Quantum error correction}.
\newblock Cambridge university press, 2013.

\bibitem{wang2020qudits}
Yuchen Wang, Zixuan Hu, Barry~C Sanders, and Sabre Kais.
\newblock Qudits and high-dimensional quantum computing.
\newblock {\em Frontiers in Physics}, 8:479, 2020.

\bibitem{low2020practical}
Pei~Jiang Low, Brendan~M White, Andrew~A Cox, Matthew~L Day, and Crystal Senko.
\newblock Practical trapped-ion protocols for universal qudit-based quantum
  computing.
\newblock {\em Physical Review Research}, 2(3):033128, 2020.

\bibitem{quditlight}
Poolad Imany, Jose~A Jaramillo-Villegas, Mohammed~S Alshaykh, Joseph~M Lukens,
  Ogaga~D Odele, Alexandria~J Moore, Daniel~E Leaird, Minghao Qi, and Andrew~M
  Weiner.
\newblock High-dimensional optical quantum logic in large operational spaces.
\newblock {\em npj Quantum Information}, 5(1):1--10, 2019.

\bibitem{kononenko2020characterization}
M~Kononenko, MA~Yurtalan, J~Shi, and A~Lupascu.
\newblock Characterization of control in a superconducting qutrit using
  randomized benchmarking.
\newblock {\em arXiv preprint arXiv:2009.00599}, 2020.

\bibitem{yurtalan2020implementation}
Muhammet~Ali Yurtalan, Jiahao Shi, M~Kononenko, A~Lupascu, and S~Ashhab.
\newblock Implementation of a walsh-hadamard gate in a superconducting qutrit.
\newblock {\em Physical Review Letters}, 125(18):180504, 2020.

\bibitem{quditgeneral}
Avanti Ketkar, Andreas Klappenecker, Santosh Kumar, and Pradeep~Kiran
  Sarvepalli.
\newblock Nonbinary stabilizer codes over finite fields.
\newblock {\em IEEE Transactions on Information Theory}, 52(11):4892--4914,
  2006.

\bibitem{quditbch}
Yang Liu, Ruihu Li, Guanmin Guo, and Junli Wang.
\newblock Some nonprimitive bch codes and related quantum codes.
\newblock {\em IEEE Transactions on Information Theory}, 65(12):7829--7839,
  2019.

\bibitem{quditmds}
Xiaoshan Kai, Shixin Zhu, and Ping Li.
\newblock Constacyclic codes and some new quantum mds codes.
\newblock {\em IEEE Transactions on Information Theory}, 60(4):2080--2086,
  2014.

\bibitem{gunderman2020local}
Lane~G Gunderman.
\newblock Local-dimension-invariant qudit stabilizer codes.
\newblock {\em Physical Review A}, 101(5):052343, 2020.

\bibitem{contextuality}
Mark Howard, Joel Wallman, Victor Veitch, and Joseph Emerson.
\newblock Contextuality supplies the ‘magic’for quantum computation.
\newblock {\em Nature}, 510(7505):351--355, 2014.

\bibitem{qudit}
Daniel Gottesman.
\newblock Fault-tolerant quantum computation with higher-dimensional systems.
\newblock In {\em NASA International Conference on Quantum Computing and
  Quantum Communications}, pages 302--313. Springer, 1998.

\bibitem{5qudit}
HF~Chau.
\newblock Five quantum register error correction code for higher spin systems.
\newblock {\em Physical Review A}, 56(1):R1, 1997.

\bibitem{chau1997correcting}
HF~Chau.
\newblock Correcting quantum errors in higher spin systems.
\newblock {\em Physical Review A}, 55(2):R839, 1997.

\bibitem{sarvepalli2010degenerate}
Pradeep Sarvepalli and Andreas Klappenecker.
\newblock Degenerate quantum codes and the quantum hamming bound.
\newblock {\em Physical Review A}, 81(3):032318, 2010.

\bibitem{griffiths1946note}
LW~Griffiths.
\newblock A note on linear homogeneous diophantine equations.
\newblock {\em Bulletin of the American Mathematical Society}, 52(8):734--736,
  1946.

\bibitem{givens1947parametric}
Wallace Givens.
\newblock Parametric solution of linear homogeneous diophantine equations.
\newblock {\em Bulletin of the American Mathematical Society}, 53(8):780--783,
  1947.

\bibitem{brun2006correcting}
Todd Brun, Igor Devetak, and Min-Hsiu Hsieh.
\newblock Correcting quantum errors with entanglement.
\newblock {\em science}, 314(5798):436--439, 2006.

\bibitem{wilde2008optimal}
Mark~M Wilde and Todd~A Brun.
\newblock Optimal entanglement formulas for entanglement-assisted quantum
  coding.
\newblock {\em Physical Review A}, 77(6):064302, 2008.

\end{thebibliography}

\end{document}